\newtheorem{theorem}{Theorem}
\newtheorem{lemma}[theorem]{Lemma}
\def\infinity{\infty}
\def\:={\,\raisebox{0.85pt}{.}\hspace{-2.78pt}\raisebox{2.85pt}{.}\!\!=\,}
\def\=:{\,=\!\!\raisebox{0.85pt}{.}\hspace{-2.78pt}\raisebox{2.85pt}{.}\,}
\begin{document}

\title{Bounds on quantum adiabaticity in driven many-body systems\\
from generalized orthogonality catastrophe and quantum speed limit}

\author{Jyong-Hao Chen}
\email{jhchen@lorentz.leidenuniv.nl}
\affiliation{Instituut-Lorentz, Universiteit Leiden, P.O.\ Box 9506, 2300 RA Leiden, The Netherlands}
\author{Vadim Cheianov}
\affiliation{Instituut-Lorentz, Universiteit Leiden, P.O.\ Box 9506, 2300 RA Leiden, The Netherlands}

\date{\today}    

\begin{abstract}
We provide two inequalities for estimating adiabatic fidelity in terms of two other more handily calculated quantities, 
i.e., generalized orthogonality catastrophe and quantum speed limit.
As a result of considering a two-dimensional subspace spanned by the initial ground state and its orthogonal complement, 
our method leads to stronger bounds on adiabatic fidelity than those previously obtained.
One of the two inequalities is nearly sharp when the system size is large, as illustrated using a driven Rice-Mele model, 
which represents a broad class of quantum many-body systems whose overlap of different instantaneous ground states exhibits orthogonality catastrophe.
\end{abstract}

\maketitle

{\bf Introduction.---}
The celebrated quantum adiabatic theorem (QAT) is a fundamental theorem in quantum mechanics \cite{Born27,Born28,Kato50,Messiah14},
which has many applications ranging from the Gell-Mann and Low formula in quantum field theory \cite{Nenciu89,Fetter03},
Born-Oppenheimer approximation in atomic physics \cite{Born27b,Ziman01,Car85}, and adiabatic transport in solid-state physics \cite{Thouless83,Berry84,Avron88,Avron95}
to adiabatic quantum computation \cite{Farhi00,Roland02,Albash18} and adiabatic quantum state manipulation \cite{Ivanov01,Budich13} in quantum technology.
In its simplest form, the QAT states that if the initial state of a quantum system is one of the eigenstates of a time-dependent Hamiltonian, which describes the system, 
and if the time variation of the Hamiltonian is {\it slow enough}, 
then the state of the system at a later time will still be close to the instantaneous eigenstate of the Hamiltonian.
To be more specific, 
we are interested in the time-dependent Hamiltonian $H^{\,}_{\lambda}$ whose dependence on time $t$ is through an implicit function $\lambda(t).$
For each $\lambda,$ the instantaneous ground state $|\Phi^{\,}_{\lambda}\rangle$ obeys the instantaneous eigenvalue equation 
of $H^{\,}_{\lambda}$,
\begin{align}
H^{\,}_{\lambda}|\Phi^{\,}_{\lambda}\rangle=E^{(0)}_{\lambda}|\Phi^{\,}_{\lambda}\rangle,
\label{eq: instantaneous eigenvalue equation}
\end{align}
with $E^{(0)}_{\lambda}$ being the instantaneous ground state energy,
whereas the physical state $|\Psi^{\,}_{\lambda}\rangle$ is the solution to the scaled time-dependent Schr\"odinger equation ($\hbar\equiv1$),
\begin{align}
\mathrm{i}\Gamma\frac{\partial}{\partial \lambda}|\Psi^{\,}_{\lambda}\rangle=H^{\,}_{\lambda}|\Psi^{\,}_{\lambda}\rangle,
\label{eq: time-dependent SE}
\end{align}
with $|\Psi^{\,}_{0}\rangle=|\Phi^{\,}_{0}\rangle$ by preparing the initial state, $|\Psi^{\,}_{0}\rangle$, to be the same as the ground state of $H^{\,}_{\lambda=0}$,
$|\Phi^{\,}_{0}\rangle.$ 
Here, $\Gamma\:=\partial^{\,}_{t}\lambda$ is the driving rate.

Mathematically, the QAT states that for however small $\epsilon>0$
and arbitrary value of $\lambda$, there exists a driving rate $\Gamma$ small enough such that
\begin{align}
1-\mathcal{F}(\lambda)\,<\,\epsilon,
\label{thm: QAT}
\end{align} 
where the {\it fidelity of adiabatic evolution} (for short, {\it adiabatic fidelity}),   
\begin{align}
\mathcal{F}(\lambda)=|\langle\Phi^{\,}_{\lambda}|\Psi^{\,}_{\lambda}\rangle|^2,
\label{eq: adiabatic fidelity}
\end{align} 
is the fidelity between the physical state
$|\Psi^{\,}_{\lambda}\rangle$
and the instantaneous ground state $|\Phi^{\,}_{\lambda}\rangle$.
The QAT is a powerful asymptotic statement. 
However, in certain contexts, it is not sufficient because one would like to know how quickly $\mathcal{F}(\lambda)$ 
approaches unity with decreasing the driving rate.
This is generally a hard problem since it is difficult to
compute the adiabatic fidelity (\ref{eq: adiabatic fidelity}) for generic quantum many-body systems
by directly solving the instantaneous eigenvalue equation (\ref{eq: instantaneous eigenvalue equation}) 
and the time-dependent Schr\"odinger equation (\ref{eq: time-dependent SE}).
Moreover, most of the existing literature merely proves the existence of the QAT in various settings \cite{Kato50,Avron87,Avron99,Jansen07,Bachmann17}
but rarely provides useful and practical tools for computing the adiabatic fidelity (\ref{eq: adiabatic fidelity}) quantitatively.

To make further progress, an insight proposed in Ref.\ \cite{Lychkovskiy17} is to compare the physical state $|\Psi^{\,}_{\lambda}\rangle$ 
and the instantaneous ground state $|\Phi^{\,}_{\lambda}\rangle$ for a given $\lambda$
with their common initial state $|\Psi^{\,}_{0}\rangle=|\Phi^{\,}_{0}\rangle.$ 
We now introduce these two extra ingredients in turn.
First, the fidelity between the instantaneous ground state $|\Phi^{\,}_{\lambda}\rangle$ and its initial state $|\Phi^{\,}_{0}\rangle$,
\begin{align}
\mathcal{C}(\lambda)\:=|\langle\Phi^{\,}_{\lambda}|\Phi^{\,}_{0}\rangle|^2,
\label{eq: orthogonality catastrophe}
\end{align}
is referred to as {\it generalized orthogonality catastrophe}. 
This name is motivated by Anderson's orthogonality catastrophe \cite{Anderson67,Gebert14},
which states that the overlap between ground states in Fermi gases with and without local scattering potentials vanishes as the system size approaches infinity.
In a later section, we are interested in a wide range of classes of time-dependent many-body Hamiltonians
whose generalized orthogonality catastrophe $\mathcal{C}(\lambda)$ 
decays exponentially with the system size and $\lambda^{2}$.
Second, the fidelity between the physical state $|\Psi^{\,}_{\lambda}\rangle$ and its initial state $|\Psi^{\,}_{0}\rangle$,
\begin{align}
F(\Psi^{\,}_{\lambda},\Psi^{\,}_{0})\:=|\langle\Psi^{\,}_{\lambda}|\Psi^{\,}_{0}\rangle|^2,
\label{eq: define fidelity}
\end{align}
is another useful quantity since the corresponding {\it Bures angle} $D(\Psi^{\,}_{\lambda},\Psi^{\,}_{0})$ 
\cite{Nielsen10,Bengtsson17},
\begin{align}
D(\Psi^{\,}_{\lambda},\Psi^{\,}_{0})\:= \frac{2}{\pi}\arccos\sqrt{F(\Psi^{\,}_{\lambda},\Psi^{\,}_{0})},
\label{eq: define Bures angle}
\end{align}
is upper bounded by using a version of the quantum speed limit \cite{Pfeifer93a,Pfeifer95}
\footnote{
Also, refer to Supplemental Material {\bf S1} for a simple alternative derivation
of the quantum speed limit (\ref{eq: bound from QSL})
},
\begin{subequations}
\label{eq: bound from QSL}
\begin{align}
\frac{\pi}{2}
D(\Psi^{\,}_{\lambda},\Psi^{\,}_{0})\leq
\min\left(\mathcal{R}(\lambda),\frac{\pi}{2}\right)
\equiv
\widetilde{\mathcal{R}}(\lambda),
\label{eq: bound from QSL a}
\end{align}
where
\begin{align}
&\mathcal{R}(\lambda)\:=
\int^{\lambda}_{0}
\frac{\mathrm{d}\lambda'}{|\partial^{\,}_{t}\lambda'|}
\Delta E^{\,}_{0}(\lambda'),
\label{eq: bound from QSL b}
\\
&\Delta E^{\,}_{0}(\lambda')\:=
\sqrt{
\langle\Psi^{\,}_{0}|H^{2}_{\lambda'}|\Psi^{\,}_{0}\rangle
-
\langle\Psi^{\,}_{0}|H^{\,}_{\lambda'}|\Psi^{\,}_{0}\rangle^{2}
}.
\label{eq: bound from QSL c}
\end{align}
\end{subequations}
Note that in Eq.\ (\ref{eq: bound from QSL a}) we have taken into account the fact that the Bures angle $D\left(\Psi^{\,}_{\lambda},\Psi^{\,}_{0}\right)\in[0,1]$ 
by its definition
as the function $\mathcal{R}(\lambda)$ defined in Eq.\ (\ref{eq: bound from QSL b}) is not guaranteed to be upper bounded by $\pi/2.$
The quantum speed limit is essentially a measure of how fast a quantum system can evolve.
Since the Bures angle $D(\Psi^{\,}_{\lambda},\Psi^{\,}_{0})$ measures a distance between two states,
the quantum uncertainty $\Delta E^{\,}_{0}(\lambda')$ 
in Eq.\ (\ref{eq: bound from QSL}) plays the role of speed.
Although there is not one single quantum speed limit,
we work with the version shown in Eq.\ (\ref{eq: bound from QSL b}) that enables its computation by knowing merely the Hamiltonian
and the initial state.
It is worth mentioning that a recent  theoretical study \cite{Campo21} suggests that quantum speed limits can be probed in cold-atom experiments.

Utilizing the generalized orthogonality catastrophe (\ref{eq: orthogonality catastrophe}) and the quantum speed limit (\ref{eq: bound from QSL}) as additional ingredients,
the main result of Ref.\ \cite{Lychkovskiy17}
is an inequality providing an upper bound for the difference 
between the adiabatic fidelity $\mathcal{F}(\lambda)$ (\ref{eq: adiabatic fidelity}) 
and the generalized orthogonality catastrophe $\mathcal{C}(\lambda)$ (\ref{eq: orthogonality catastrophe}),
\begin{align}
\left|
\mathcal{F}(\lambda)-\mathcal{C}(\lambda)
\right|
&\;\leq\; 
\widetilde{\mathcal{R}}(\lambda),
\label{eq: previous result}
\end{align}
where 
$\widetilde{\mathcal{R}}(\lambda)$ is defined in Eq.\ (\ref{eq: bound from QSL a}).
In this work, we derive two improved inequalities that are stronger than the inequality (\ref{eq: previous result}).
A schematic summary of our main results is illustrated in Fig.\ \ref{Fig: outline}.

\begin{figure}[t]
\begin{center}
\includegraphics[width=0.35\textwidth]{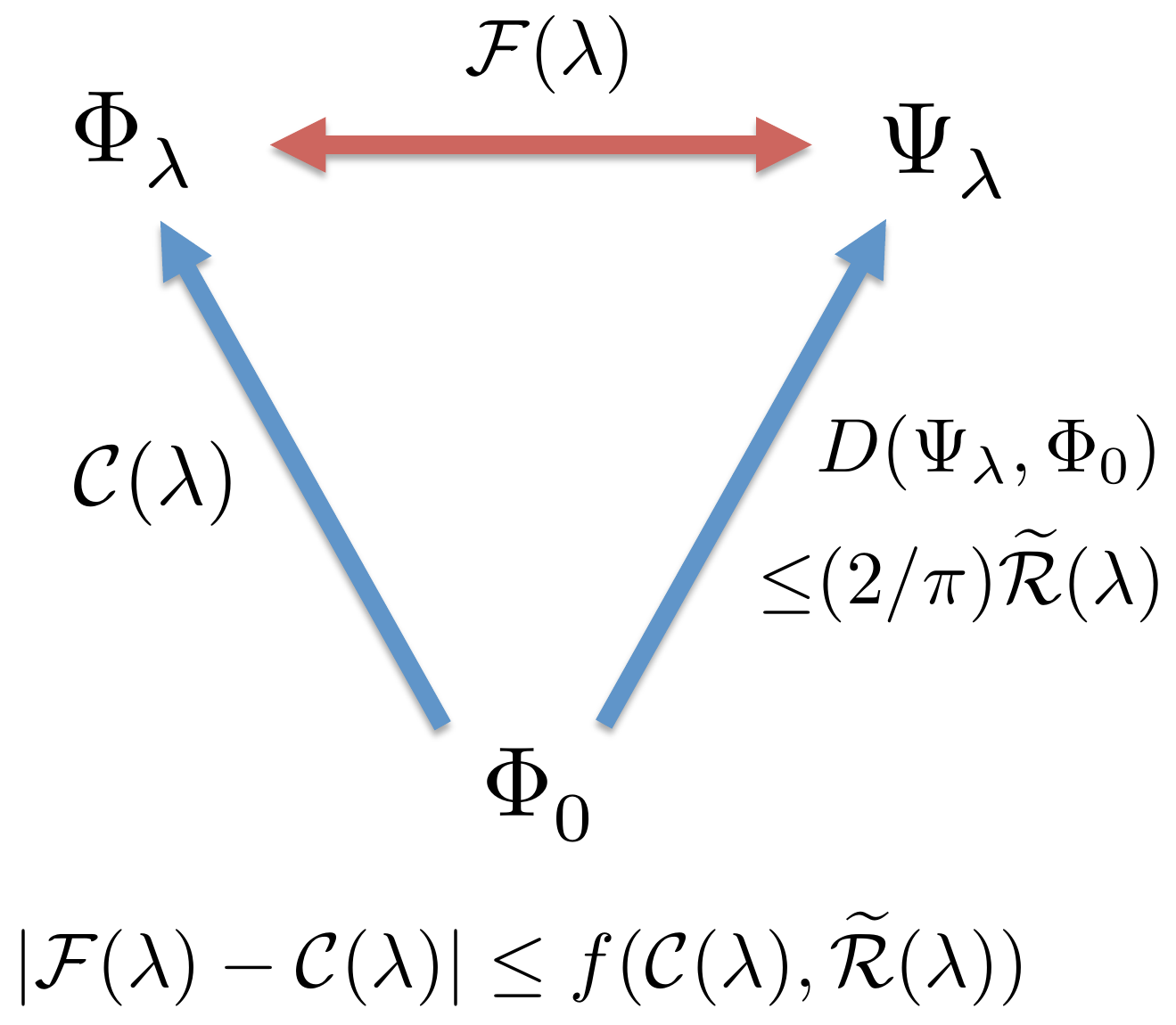}
\caption{
(Color online)
Schematic illustration of the relation between the physical state $|\Psi^{\,}_{\lambda}\rangle$, 
the instantaneous ground state $|\Phi^{\,}_{\lambda}\rangle,$ 
and the initial state $|\Psi^{\,}_{0}\rangle=|\Phi^{\,}_{0}\rangle.$  
The central object is the fidelity between $|\Psi^{\,}_{\lambda}\rangle$ and $|\Phi^{\,}_{\lambda}\rangle$, i.e., $\mathcal{F}(\lambda)$ (\ref{eq: adiabatic fidelity}),
which can be estimated through 
$\left|
\mathcal{F}(\lambda)-\mathcal{C}(\lambda)
\right|
\leq
f(\mathcal{C}(\lambda),\widetilde{\mathcal{R}}(\lambda)),$
where $f(\mathcal{C}(\lambda),\widetilde{\mathcal{R}}(\lambda))\in$
$\{\widetilde{\mathcal{R}}(\lambda),\;\sin\widetilde{\mathcal{R}}(\lambda),g(\lambda)\}$
from Eqs.\ (\ref{eq: previous result}), (\ref{eq: summarized inequalities one side a}), 
and (\ref{eq: summarized inequalities one side b}).
\label{Fig: outline}
         }
\end{center}
\end{figure}

{\bf Derivation of improved inequalities.---}
In this section, we develop an approach involving two orthonormal vectors to derive inequalities that are stronger than 
the one in Eq.\ (\ref{eq: previous result}).
Observe that for every given $\lambda$, there are three state vectors involved (see also Fig.\ \ref{Fig: outline}),
i.e., $|\Phi^{\,}_{0}\rangle,$ $|\Psi^{\,}_{\lambda}\rangle,$ and $|\Phi^{\,}_{\lambda}\rangle.$
Of which, only $|\Phi^{\,}_{0}\rangle$ is time independent and is still present at a different value of $\lambda$.
Therefore, a natural strategy is to decompose the other two states, $|\Psi^{\,}_{\lambda}\rangle$ and $|\Phi^{\,}_{\lambda}\rangle$, 
into the initial ground state $|\Phi^{\,}_{0}\rangle$ and its orthogonal complement
(think of the Gram-Schmidt process).
Let $|\Phi^{\perp}_{0}(\lambda)\rangle$ be a $\lambda$-dependent normalized state that is orthogonal to
the initial state $|\Phi^{\,}_{0}\rangle,$ i.e., $\langle\Phi^{\,}_{0}|\Phi^{\perp}_{0}(\lambda)\rangle=0,$
we then decompose the physical state $|\Psi^{\,}_{\lambda}\rangle$ in terms of these two orthonormal states,
\begin{align}
|\Psi^{\,}_{\lambda}\rangle=e^{\mathrm{i}\varphi^{\,}_{\lambda}}\cos\theta^{\,}_{\lambda}|\Phi^{\,}_{0}\rangle
+\sin\theta^{\,}_{\lambda}|\Phi^{\perp}_{0}(\lambda)\rangle,
\label{eq: psi expansion}
\end{align}
where $\theta^{\,}_{\lambda}\in[0,\pi/2]$ and $\varphi^{\,}_{\lambda}\in[0,2\pi]$ with the subscript $\lambda$ indicates that both $\theta^{\,}_{\lambda}$ 
and $\varphi^{\,}_{\lambda}$ are a function of $\lambda=\lambda(t)$.
Notice that, by construction,
\begin{subequations}
\label{eq: properties by construction}
\begin{align}
&
|\langle\Phi^{\,}_{0}|\Psi^{\,}_{\lambda}\rangle| = \cos\theta^{\,}_{\lambda}
\quad\Leftrightarrow\quad
\theta^{\,}_{\lambda}=\frac{\pi}{2}D(\Psi^{\,}_{\lambda},\Psi^{\,}_{0}),
\label{eq: properties by construction a}
\\
&
\langle\Phi^{\perp}_{0}(\lambda)|\Psi^{\,}_{\lambda}\rangle = \sin\theta^{\,}_{\lambda}=\sin\left(\frac{\pi}{2}D(\Psi^{\,}_{\lambda},\Psi^{\,}_{0})\right).
\label{eq: properties by construction b}
\end{align} 
\end{subequations}
Similarly, the instantaneous ground state $|\Phi^{\,}_{\lambda}\rangle$ can be decomposed into the initial state $|\Phi^{\,}_{0}\rangle$ and 
another $\lambda$-dependent orthogonal complement $|\widetilde{\Phi}^{\perp}_{0}(\lambda)\rangle$
[which need not be the same as $|\Phi^{\perp}_{0}(\lambda)\rangle$ introduced in Eq.\ (\ref{eq: psi expansion})],
\begin{align}
\label{eq: phi expansion}
|\Phi^{\,}_{\lambda}\rangle=
\langle\Phi^{\,}_{0}|\Phi^{\,}_{\lambda}\rangle|\Phi^{\,}_{0}\rangle
+
\langle\widetilde{\Phi}^{\perp}_{0}(\lambda)|\Phi^{\,}_{\lambda}\rangle|\widetilde{\Phi}^{\perp}_{0}(\lambda)\rangle.
\end{align}
The normalization condition, 
$1=\langle\Phi^{\,}_{\lambda}|\Phi^{\,}_{\lambda}\rangle$,
then implies
\begin{align}
|\langle\widetilde{\Phi}^{\perp}_{0}(\lambda)|\Phi^{\,}_{\lambda}\rangle|
=\sqrt{1-|\langle\Phi^{\,}_{0}|\Phi^{\,}_{\lambda}\rangle|^{2}}
=
\sqrt{1-\mathcal{C}(\lambda)},
\label{eq: define square of one minus capitalC}
\end{align}
where $\mathcal{C}(\lambda)$ is defined in Eq.\ (\ref{eq: orthogonality catastrophe}).

Since the components of the physical state $|\Psi^{\,}_{\lambda}\rangle$ (\ref{eq: psi expansion}) are entirely determined by the Bures angle, 
$D(\Psi^{\,}_{\lambda},\Psi^{\,}_{0})=(2/\pi)\theta^{\,}_{\lambda}$, 
and that of the instantaneous ground state $|\Phi^{\,}_{\lambda}\rangle$ (\ref{eq: phi expansion}) by the generalized orthogonality catastrophe, 
$\mathcal{C}(\lambda),$ 
it is then obvious that their overlap, the adiabatic fidelity $\mathcal{F}(\lambda)$ (\ref{eq: adiabatic fidelity}), 
should be wholly determined by both $\theta^{\,}_{\lambda}$ and $\mathcal{C}(\lambda)$,
as will be seen shortly.

We are in a position to 
compute the adiabatic fidelity $\mathcal{F}(\lambda)$
(\ref{eq: adiabatic fidelity}) 
using Eq.\ (\ref{eq: psi expansion}),
\begin{align}
\mathcal{F}(\lambda)
=
\cos^2\theta^{\,}_{\lambda}|\langle\Phi^{\,}_{\lambda}|\Phi^{\,}_{0}\rangle|^{2}
+
\sin^2\theta^{\,}_{\lambda}|\langle\Phi^{\,}_{\lambda}|\Phi^{\perp}_{0}(\lambda)\rangle|^{2}
\nonumber\\
+
\Re
\Big(
e^{\mathrm{i}\varphi^{\,}_{\lambda}}\sin(2\theta^{\,}_{\lambda})\langle\Phi^{\,}_{\lambda}|\Phi^{\,}_{0}\rangle\langle\Phi^{\perp}_{0}(\lambda)|\Phi^{\,}_{\lambda}\rangle
\Big).
\label{eq: F adb in perp basis}
\end{align}
The main object of interest, $|\mathcal{F}(\lambda)-\mathcal{C}(\lambda)|,$
can then be computed  
using (i) Eq.\ (\ref{eq: F adb in perp basis}),
(ii) the triangle inequality for absolute value, 
and
(iii) the inequality $\Re(z)\leq|z|$ for $z\in\mathbb{C}$,
\begin{align}
&
\left|
\mathcal{F}(\lambda)
-
\mathcal{C}(\lambda)
\right|
\nonumber\\
\stackrel{\rm (i)}{=}&\,
\Big|
\sin^2\theta^{\,}_{\lambda}
\left(
-
|\langle\Phi^{\,}_{\lambda}|\Phi^{\,}_{0}\rangle|^{2}
+
|\langle\Phi^{\,}_{\lambda}|\Phi^{\perp}_{0}(\lambda)\rangle|^{2}
\right)
\nonumber\\
&+
\Re
\Big(
e^{\mathrm{i}\varphi^{\,}_{\lambda}}\sin(2\theta^{\,}_{\lambda})\langle\Phi^{\,}_{\lambda}|\Phi^{\,}_{0}\rangle\langle\Phi^{\perp}_{0}(\lambda)|\Phi^{\,}_{\lambda}\rangle
\Big)
\Big|
\nonumber\\
\stackrel{\rm (ii)}{\leq}&\,
\Big|
\sin^2\theta^{\,}_{\lambda}
\left(
-
|\langle\Phi^{\,}_{\lambda}|\Phi^{\,}_{0}\rangle|^{2}
+
|\langle\Phi^{\,}_{\lambda}|\Phi^{\perp}_{0}(\lambda)\rangle|^{2}
\right)
\Big|
\nonumber\\
&+
\Big|
\Re
\Big(
e^{\mathrm{i}\varphi^{\,}_{\lambda}}\sin(2\theta^{\,}_{\lambda})\langle\Phi^{\,}_{\lambda}|\Phi^{\,}_{0}\rangle\langle\Phi^{\perp}_{0}(\lambda)|\Phi^{\,}_{\lambda}\rangle
\Big)
\Big|
\nonumber\\
\stackrel{\rm (iii)}{\leq}&\,
\Big|
\sin^2\theta^{\,}_{\lambda}
\left(
-
|\langle\Phi^{\,}_{\lambda}|\Phi^{\,}_{0}\rangle|^{2}
+
|\langle\Phi^{\,}_{\lambda}|\Phi^{\perp}_{0}(\lambda)\rangle|^{2}
\right)
\Big|
\nonumber\\
&+
\sin(2\theta^{\,}_{\lambda})
\left|\langle\Phi^{\,}_{\lambda}|\Phi^{\,}_{0}\rangle\right|
\left|\langle\Phi^{\perp}_{0}(\lambda)|\Phi^{\,}_{\lambda}\rangle\right|
\nonumber\\
\leq&\,
\Big|
\sin^2\theta^{\,}_{\lambda}
\left(
-
|\langle\Phi^{\,}_{\lambda}|\Phi^{\,}_{0}\rangle|^{2}
+
|\langle\Phi^{\,}_{\lambda}|\widetilde{\Phi}^{\perp}_{0}(\lambda)\rangle|^{2}
\right)
\Big|
\nonumber\\
&+
\sin(2\theta^{\,}_{\lambda})
\left|\langle\Phi^{\,}_{\lambda}|\Phi^{\,}_{0}\rangle\right|
\left|\langle\widetilde{\Phi}^{\perp}_{0}(\lambda)|\Phi^{\,}_{\lambda}\rangle\right|,
\label{eq: inequality perp basis}
\end{align}
where the last expression is obtained after using the following inequality,
\begin{align}
|\langle \Phi^{\perp}_{0}(\lambda)|\Phi^{\,}_{\lambda}\rangle|
\;\leq\;
|\langle\widetilde{\Phi}^{\perp}_{0}(\lambda)|\Phi^{\,}_{\lambda}\rangle|,
\label{eq: an useful inequality for inner product of two orthogonal complements}
\end{align}
which is a result of Eq.\ (\ref{eq: phi expansion}) 
with $|\langle \Phi^{\perp}_{0}(\lambda)|\widetilde{\Phi}^{\perp}_{0}(\lambda)\rangle|\leq1$.

Making use of Eq.\ (\ref{eq: define square of one minus capitalC}) to express $|\langle\Phi^{\,}_{\lambda}|\Phi^{\,}_{0}\rangle|$ 
and $|\langle\Phi^{\,}_{\lambda}|\widetilde{\Phi}^{\perp}_{0}(\lambda)\rangle|$
in terms of $\sqrt{\mathcal{C}(\lambda)}$ and $\sqrt{1-\mathcal{C}(\lambda)}$, respectively, the inequality (\ref{eq: inequality perp basis}) then reads
\begin{subequations}
\label{eq: inequality perp basis in terms of capitalC all}
\begin{align}
\left|\mathcal{F}(\lambda)-\mathcal{C}(\lambda)\right|
\quad\leq\quad
g(\mathcal{C}(\lambda),\theta^{\,}_{\lambda}),
\label{eq: inequality perp basis in terms of capitalC}
\end{align}
where we have introduced an auxiliary function $g(\mathcal{C},\theta)$ for later convenience,
\begin{align}
g(\mathcal{C},\theta)\:=\sin^2\theta
\left|
1-
2\mathcal{C}
\right|
+
\sin(2\theta)
\sqrt{\mathcal{C}}\sqrt{1-\mathcal{C}}.
\label{eq: define an auxiliary function}
\end{align}
\end{subequations}
Note that the right side of Eq.\ (\ref{eq: inequality perp basis in terms of capitalC}) depends on only two independent variables, i.e.,
$\mathcal{C}(\lambda)$ and $\theta^{\,}_{\lambda},$ as claimed previously.

It remains to find upper bounds on the function $g(\mathcal{C},\theta)$ (\ref{eq: define an auxiliary function}).
To this end, 
treating it as a function of $\mathcal{C}$ alone,
one finds two degenerate global maxima of $g(\mathcal{C},\theta)$ occur when $\mathcal{C}=(1\pm\sin\theta)/2$ and yields
\begin{align}
\max_{0\leq\mathcal{C}\leq1}g(\mathcal{C},\theta)=\sin\theta.
\end{align}
Therefore, an upper bound for the right side of Eq.\ (\ref{eq: inequality perp basis in terms of capitalC}) is obtained as
\begin{align}
\left|\mathcal{F}(\lambda)-\mathcal{C}(\lambda)\right|
\;\leq\;
\max_{0\leq\mathcal{C}(\lambda)\leq1}g(\mathcal{C}(\lambda),\theta^{\,}_{\lambda})
\;=\;
\sin\theta^{\,}_{\lambda}.
\label{eq: new inequality}
\end{align}
Note that the inequality $\left|\mathcal{F}(\lambda)-\mathcal{C}(\lambda)\right|\leq\sin\theta^{\,}_{\lambda}$ (\ref{eq: new inequality})
can also be proved alternatively using a fairly elementary method explained in Supplemental Material {\bf S2},
which seems to first appear in Refs.\ \cite{Rastegin02a,Rastegin02b}.
Upon using the bound from the quantum speed limit (\ref{eq: bound from QSL})
[recall the relation $\theta^{\,}_{\lambda}=\frac{\pi}{2}D(\Psi^{\,}_{\lambda},\Psi^{\,}_{0})$ from Eq.\ (\ref{eq: properties by construction a})]
and considering the fact that $\sin x$ is a monotonically increasing function for $x\in[0,\pi/2],$
the rightmost side in Eq.\ (\ref{eq: new inequality}) can be further bounded from above by $\sin\widetilde{\mathcal{R}}(\lambda),$
\begin{align}
\left|
\mathcal{F}(\lambda)-\mathcal{C}(\lambda)
\right|
\;\leq\;
\sin\theta^{\,}_{\lambda}
\;\leq\;
\sin\widetilde{\mathcal{R}}(\lambda).
\label{eq: summarized inequalities one side a}
\end{align}
This is the first improved inequality mentioned in the Introduction.
It is evident that this inequality (\ref{eq: summarized inequalities one side a})
provides a stronger bound compared to the previous inequality (\ref{eq: previous result})
since $\sin x\leq x$ for $x\in[0,\pi/2].$

One may wonder whether it is possible to obtain an upper bound that is stronger than $\sin\widetilde{\mathcal{R}}(\lambda)$ (\ref{eq: summarized inequalities one side a})
by manipulating the function $g(\mathcal{C}(\lambda),\theta^{\,}_{\lambda})$ defined in Eq.\ (\ref{eq: define an auxiliary function}).
The answer is affirmative
provided an upper bound on the $\sin(2\theta^{\,}_{\lambda})$ term of Eq.\ (\ref{eq: inequality perp basis in terms of capitalC all})
can be found.
To show this, recall that $\theta^{\,}_{\lambda}=\frac{\pi}{2}D(\Psi^{\,}_{\lambda},\Psi^{\,}_{0})$ is upper bounded 
by using the quantum speed limit (\ref{eq: bound from QSL}),
$
\theta^{\,}_{\lambda}
\leq
\widetilde{\mathcal{R}}(\lambda).
$
It then follows that
$\sin\theta^{\,}_{\lambda}\leq\sin\widetilde{\mathcal{R}}(\lambda)$
and
$\sin(2\theta^{\,}_{\lambda})\leq\sin(2\widetilde{\widetilde{\mathcal{R}}}(\lambda))$
for
$\theta^{\,}_{\lambda}\leq
\widetilde{\mathcal{R}}(\lambda)
\leq\pi/2,$
where 
\begin{align}
\widetilde{\widetilde{\mathcal{R}}}(\lambda)
\:=
\min\left(\mathcal{R}(\lambda),\frac{\pi}{4}\right).
\label{eq: define another auxiliary R}
\end{align}
Using these facts, one may further bound the right side of Eq.\ (\ref{eq: inequality perp basis in terms of capitalC}) from above
\begin{align}
\left|\mathcal{F}(\lambda)-\mathcal{C}(\lambda)\right|
&\leq
g(\lambda),
\label{eq: summarized inequalities one side b}
\end{align}
where the function $g(\lambda)$ reads
\begin{subequations}
\label{eq: define g as a function of lambda}
\begin{align}
&g(\lambda)
\:=
g^{\,}_{1}(\lambda)
+
g^{\,}_{2}(\lambda),
\label{eq: define g as a function of lambda a}
\\
&g^{\,}_{1}(\lambda)
\:=\sin^2\widetilde{\mathcal{R}}(\lambda)
\left|
1-
2\mathcal{C}(\lambda)
\right|,
\label{eq: define g as a function of lambda b}
\\
&g^{\,}_{2}(\lambda)\:=
\sin(2\widetilde{\widetilde{\mathcal{R}}}(\lambda))
\sqrt{\mathcal{C}(\lambda)}\sqrt{1-\mathcal{C}(\lambda)},
\label{eq: define g as a function of lambda c}
\end{align}
\end{subequations}
where $\widetilde{\mathcal{R}}(\lambda)$ and $\widetilde{\widetilde{\mathcal{R}}}(\lambda)$ 
are defined in Eqs.\ (\ref{eq: bound from QSL a}) and (\ref{eq: define another auxiliary R}), respectively.
The inequality (\ref{eq: summarized inequalities one side b}) 
is the second improved inequality mentioned in the Introduction.

We want to  
emphasize that the two improved inequalities, Eqs.\ (\ref{eq: summarized inequalities one side a})
and (\ref{eq: summarized inequalities one side b}),
are applicable to any quantum system, 
no matter whether the system size is large or small.
Nevertheless, as is demonstrated in a later section, 
the second improved inequality (\ref{eq: summarized inequalities one side b}) 
is particularly powerful when the system size is large.

\begin{figure*}[t]
\begin{center}
(a)
\includegraphics[width=0.4\textwidth]{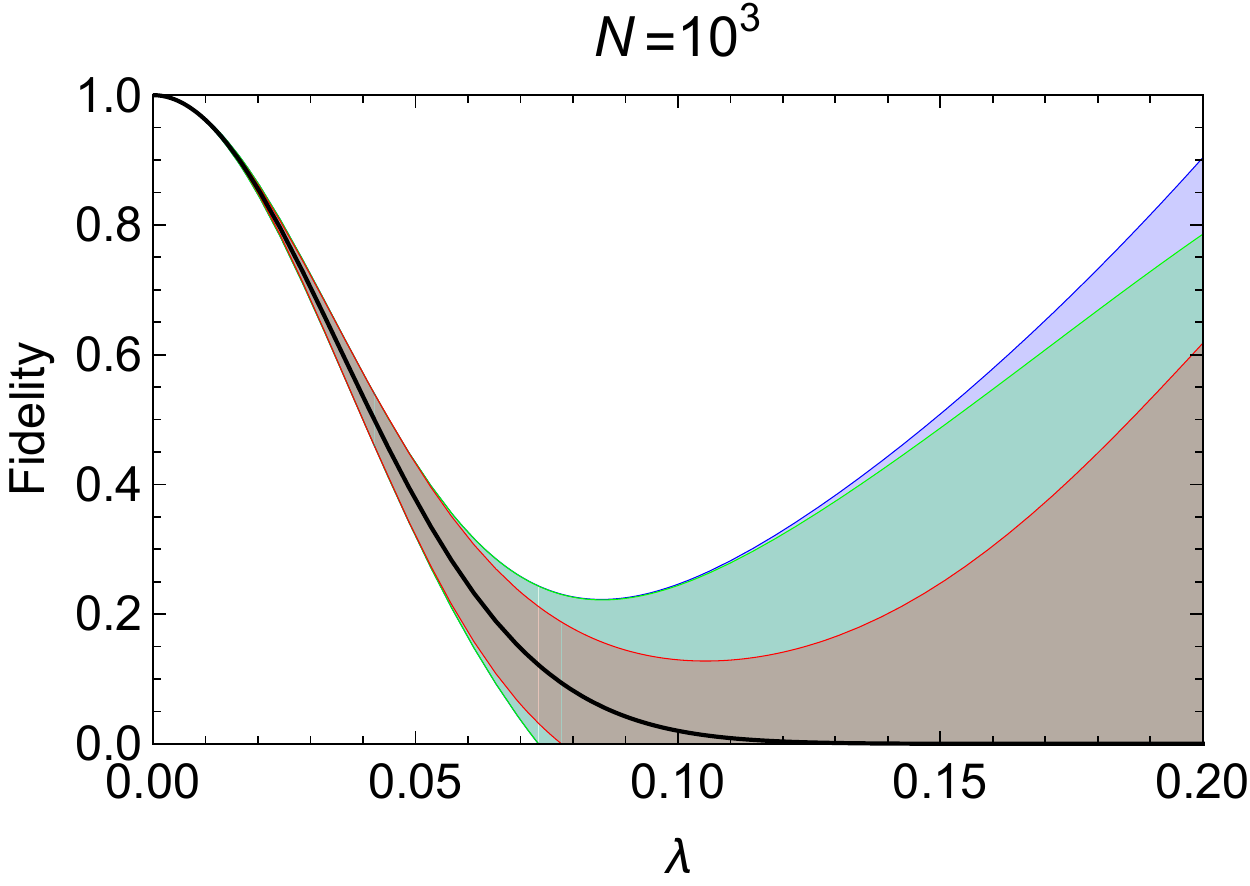}
(b)
\includegraphics[width=0.51\textwidth]{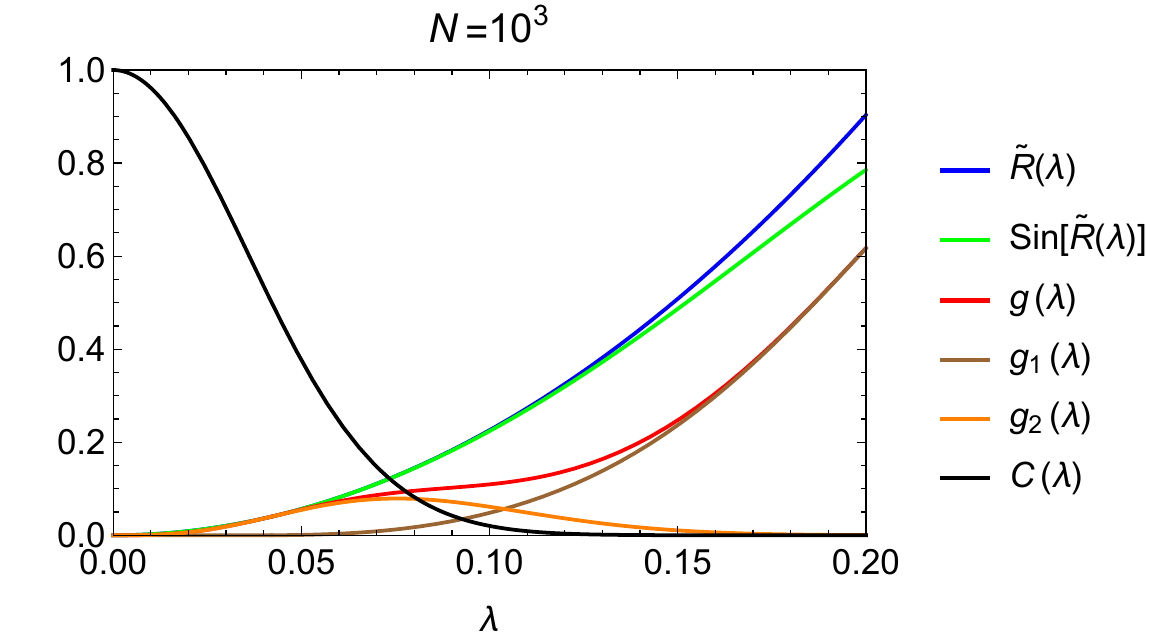}
\caption{
(Color online)
Adiabatic fidelity $\mathcal{F}(\lambda)$ and generalized orthogonality catastrophe $\mathcal{C}(\lambda)$ described by 
the Hamiltonian (\ref{eq: RM model}) with the parametrization 
(\ref{eq: chosen value of parameters}) for $N=10^3$ sites.
(a)
Comparison between the old inequality (\ref{eq: previous result}) 
and the two improved inequalities (\ref{eq: summarized inequalities one side a}) and (\ref{eq: summarized inequalities one side b}).
The black curve is for $\mathcal{C}(\lambda),$ which is, however, indistinguishable from $\mathcal{F}(\lambda)$ in the plot.
The blue-shaded (resp., green- and red-shaded) region is the bound for 
$\mathcal{F}(\lambda)$ 
from $\widetilde{\mathcal{R}}(\lambda)$ (\ref{eq: previous result}) 
[resp., from $\sin\widetilde{\mathcal{R}}(\lambda)$ in Eq.\ (\ref{eq: summarized inequalities one side a}) 
and from $g(\lambda)$ in Eq.\ (\ref{eq: summarized inequalities one side b})].
The red-shaded (resp., green-shaded) area is about 59\% (resp., 95\%) of the blue-shaded area.
(b)
Behavior of the functions
$\widetilde{\mathcal{R}}(\lambda)$, $\sin\widetilde{\mathcal{R}}(\lambda)$,
and $g(\lambda)=g^{\,}_{1}(\lambda)+g^{\,}_{2}(\lambda)$ (\ref{eq: define g as a function of lambda}) as a function of $\lambda.$
For comparison, $\mathcal{C}(\lambda)$ is also depicted.
Refer to the main text for further explanation.
\label{Fig: fidelity}
         }
\end{center}
\end{figure*}

{\bf Setup of driven many-body systems.---}
Before considering a specific example in the next section,
we follow Ref.\ \cite{Lychkovskiy17} to specify a wide range of quantum systems that share general properties which the specific example possesses.
The time-dependent Hamiltonian $H^{\,}_{\lambda}$ in which we are interested 
has a typical form,
\begin{align}
H^{\,}_{\lambda} = H^{\,}_{0} + \lambda V,
\end{align}
where $H^{\,}_{0}$ is a time-independent Hamiltonian with the lowest energy eigenstate $|\Phi^{\,}_{0}\rangle$ and 
$
V
$
is a driving potential.
We also assume that the driving rate $\Gamma=\partial^{\,}_{t}\lambda$ is a constant in $\lambda.$
It then follows that $\mathcal{R}(\lambda)$ [(\ref{eq: bound from QSL b})],
the time integral of quantum uncertainty,
reads
\begin{align}
\mathcal{R}(\lambda)&=
\frac{\lambda^2}{2\Gamma}
\delta V^{\,}_{N},
\quad
\delta V^{\,}_{N}
\:=
\sqrt{
\langle\Phi^{\,}_{0}|V^2|\Phi^{\,}_{0}\rangle
-
\langle\Phi^{\,}_{0}|V|\Phi^{\,}_{0}\rangle^{2}
}.
\label{eq: define std of V}
\end{align}
In other words, $\mathcal{R}(\lambda)$ is a monotonically increasing function in $\lambda^2$.

We further restrict ourselves to a broad class of time-dependent Hamiltonians
whose generalized orthogonality catastrophe $\mathcal{C}(\lambda)$ (\ref{eq: orthogonality catastrophe}) has the following simple exponentially decaying form
when the system size, $N$, is large,
\begin{align}
\ln\mathcal{C}(\lambda) = -C^{\,}_{N}\lambda^2+r(N,\lambda),
\quad
\lim_{N\to\infinity}C^{\,}_{N}=\infinity,
\label{eq: orthogonality catastrophe asymptotic}
\end{align}
where the residual $r$ satisfies $\lim^{\,}_{N\to\infinity}r(N,C^{-1/2}_{N})=0.$

{\bf Example: driven Rice-Mele model.---}
In order to demonstrate the validity of the improved inequalities, Eqs.\ (\ref{eq: summarized inequalities one side a}) and (\ref{eq: summarized inequalities one side b}),
we consider the spinless Rice-Mele model on a half-filled one-dimensional bipartite lattice with the Hamiltonian
\cite{Rice82,Nakajima16}
\begin{align}
\label{eq: RM model}
H^{\,}_{\mathrm{RM}}
=&
\sum^{N}_{j=1}
\left[
-(J+U)a^{\dag}_{j}b^{\,}_{j}
-(J-U)a^{\dag}_{j}b^{\,}_{j+1}
+\mathrm{h.c.}
\right]
\nonumber\\
&
+\sum^{N}_{j=1}
\Delta
\left(
a^{\dag}_{j}a^{\,}_{j}
-
b^{\dag}_{j}b^{\,}_{j}
\right),
\end{align}
where $a^{\,}_{j}$ and $b^{\,}_{j}$ are the fermion annihilation operators on the $a$ and $b$ sublattices, respectively.
Here, $N$ is the number of lattice sites.
For the case of $J=U=\mathrm{const}$ and $\Delta=\lambda E^{\,}_{\mathrm{R}},$ where $\lambda=\Gamma t$ and $E^{\,}_{\mathrm{R}}$ is a recoil energy,
it is shown in Ref.\ \cite{Lychkovskiy17} that
the exponent $C^{\,}_{N}$ defined in Eq.\ (\ref{eq: orthogonality catastrophe asymptotic}) and 
the quantum uncertainty $\delta V^{\,}_{N}$ defined in Eq.\ (\ref{eq: define std of V}) read as follows:
\begin{align}
C^{\,}_{N}=\frac{NE^{2}_{\mathrm{R}}}{16JU},
\quad
\delta V^{\,}_{N}=\sqrt{N}E^{\,}_{\mathrm{R}}.
\label{eq: parametrization for RM model}
\end{align}
Equation (\ref{eq: parametrization for RM model}) 
with the chosen value of the parameters from Ref.\ \cite{Lychkovskiy17},
\begin{align}
\label{eq: chosen value of parameters}
(J,\;U,\;\Delta,\;\Gamma)=(0.4E^{\,}_{\mathrm{R}},\;0.4E^{\,}_{\mathrm{R}},\;\lambda E^{\,}_{\mathrm{R}},\;0.7E^{\,}_{\mathrm{R}}),
\end{align}
gives the following expressions for
$\mathcal{C}(\lambda)$ (\ref{eq: orthogonality catastrophe asymptotic})
and 
$\mathcal{R}(\lambda)$ (\ref{eq: define std of V}):
\begin{align}
\label{eq: parametrization for RM model capitalC and capitalR}
\mathcal{C}(\lambda)=e^{-N\lambda^2/(1.6)^2},
\quad
\mathcal{R}(\lambda)=\frac{\sqrt{N}\lambda^{2}}{1.4}.
\end{align}

Provided with Eq.\ (\ref{eq: parametrization for RM model capitalC and capitalR}), 
we present in Fig.\ \ref{Fig: fidelity} the comparison of bounds on the adiabatic fidelity $\mathcal{F}(\lambda)$ 
using the old inequality (\ref{eq: previous result}) and the two improved inequalities, Eqs.\ (\ref{eq: summarized inequalities one side a}) 
and (\ref{eq: summarized inequalities one side b}),
for $N=10^3.$
Specifically, 
given the second improved inequality (\ref{eq: summarized inequalities one side b})
and noting that $\mathcal{F}(\lambda)\in[0,1]$ by its definition,
the following two-sided bound on the adiabatic fidelity $\mathcal{F}(\lambda)$ is obtained,
\begin{align}
\max\left(\mathcal{C}(\lambda)-g(\lambda),\;0\right)
\leq
\mathcal{F}(\lambda)
\leq
\min\left(\mathcal{C}(\lambda)+g(\lambda),\;1\right).
\nonumber
\end{align}
Similar expressions apply to the old inequality (\ref{eq: previous result}) and the first improved inequality (\ref{eq: summarized inequalities one side a})
with $g(\lambda)$ being replaced by $\widetilde{\mathcal{R}}(\lambda)$ and $\sin\widetilde{\mathcal{R}}(\lambda)$, respectively.
Figure \ref{Fig: fidelity}(a) shows that the second improved inequality (\ref{eq: summarized inequalities one side b})
(as represented by the red-shaded region)
greatly improves the estimate for 
$\mathcal{F}(\lambda)$ compared to the previous estimate (\ref{eq: previous result})
(as represented by the blue-shaded region). 
Figure \ref{Fig: fidelity}(b)
shows the behavior of functions
$\widetilde{\mathcal{R}}(\lambda)$, $\sin\widetilde{\mathcal{R}}(\lambda)$,
and $g(\lambda)=g^{\,}_{1}(\lambda)+g^{\,}_{2}(\lambda)$ (\ref{eq: define g as a function of lambda}) as a function of $\lambda.$
The function
$g(\lambda)$
is dominated by $g^{\,}_{2}(\lambda)$ when $\lambda$ is small and is dominated by $g^{\,}_{1}(\lambda)$ when $\lambda$ is large.
It can be understood from Eq.\ (\ref{eq: define g as a function of lambda}) that, for large $\lambda,$ 
$\mathcal{C}(\lambda)\approx0,$
so that $g(\lambda)\approx g^{\,}_{1}(\lambda)\approx\sin^2\mathcal{R}(\lambda)$.
Similarly, for small $\lambda$, $\sin(2\mathcal{R}(\lambda))\gg\sin^2\mathcal{R}(\lambda),$ so that $g(\lambda)\approx g^{\,}_{2}(\lambda).$

Clearly, the upper boundary of each shaded region in Fig.\ \ref{Fig: fidelity}(a) is determined by 
$\mathcal{C}(\lambda)+\widetilde{\mathcal{R}}(\lambda)$ for the blue-shaded region, 
by $\mathcal{C}(\lambda)+\sin\widetilde{\mathcal{R}}(\lambda)$ for the green-shaded region, 
and by $\mathcal{C}(\lambda)+g(\lambda)$ for the red-shaded region.
Observe from 
Fig.\ \ref{Fig: fidelity}(b)
that the function $\mathcal{C}(\lambda)$ is an exponentially decaying function in $\lambda,$ whereas the functions 
$\widetilde{\mathcal{R}}(\lambda)$, $\sin\widetilde{\mathcal{R}}(\lambda)$, and $g(\lambda)$ 
are monotonically increasing function in $\lambda$. 
As a result, the upper boundary of each shaded region in Fig.\ \ref{Fig: fidelity}(a)
has a valley when $\mathcal{C}(\lambda)$ is too small
and then monotonically increases as $\lambda$ increases.
Similarly, the bottom boundary of each shaded region in Fig.\ \ref{Fig: fidelity}(a) is determined by 
$\mathcal{C}(\lambda)-\widetilde{\mathcal{R}}(\lambda)$, $\mathcal{C}(\lambda)-\sin\widetilde{\mathcal{R}}(\lambda)$, and
$\mathcal{C}(\lambda)-g(\lambda),$ respectively.
The bottom boundary is at zero when 
$\mathcal{C}(\lambda)\leq\left\{\widetilde{\mathcal{R}}(\lambda),\;\sin\widetilde{\mathcal{R}}(\lambda),\;g(\lambda)\right\}.$

To further investigate the effect of increasing system size on the bounds for the
adiabatic fidelity
$\mathcal{F}(\lambda)$, 
we plot in Fig.\ \ref{Fig: fidelity large N}
the cases of $N=10^4$ and $N=10^5.$
For both cases, the green-shaded area is almost identical to the blue-shaded area, whereas
the red-shaded area is about 34\% (resp., 23\%) 
of the blue-shaded area for $N=10^4$ (resp, for $N=10^5$).
This indicates that the second improved inequality (\ref{eq: summarized inequalities one side b})
is much stronger than the old inequality (\ref{eq: previous result}) as the number of lattice sites, $N$, increases. 
This fact can be understood by noticing that 
$\mathcal{C}(\lambda)$ (\ref{eq: parametrization for RM model capitalC and capitalR}),
the generalized orthogonality catastrophe,
decays quicker with increasing $N$;
we are then forced to concentrate on the region of smaller $\lambda$. 
Consequently, it renders a smaller $g(\lambda)$, 
so that the bounds, $\mathcal{C}(\lambda)\pm g(\lambda)$, are tighter as $N$ increases.

{\bf Implication on adiabaticity breakdown.---}
This section discusses an implication from the second improved inequality (\ref{eq: summarized inequalities one side b}) on the time scale for adiabaticity breakdown 
in generic driven many-body systems that possess the following asymptotic property,
\begin{align}
\label{eq: general assumption on driven models}
\lim^{\,}_{N\to\infinity} \frac{\delta V^{\,}_{N}}{C^{\,}_{N}}=0,
\end{align}
where $\delta V^{\,}_{N}$ is the quantum uncertainty of the driving potential (\ref{eq: define std of V})
and $C^{\,}_{N}$ is the exponent of the generalized orthogonality catastrophe (\ref{eq: orthogonality catastrophe asymptotic}).
As was pointed out in Ref.\ \cite{Lychkovskiy17} using a general scaling argument,
the asymptotic form (\ref{eq: general assumption on driven models}) is obeyed by a wide range of Hamiltonians
\footnote{
For example, for both gapped and gapless systems in $D$-dimensional space with a bulk driving ($d=D$) or a boundary driving $(d=D-1)$,
the scaling form of $\delta V^{\,}_{N}$ and $C^{\,}_{N}$ reads
$(\delta V^{\,}_{N}, C^{\,}_{N}) \sim (N^{d/(2D)},N^{d/D})$,
while the scaling form is $(\delta V^{\,}_{N}, C^{\,}_{N}) \sim (1,\log N)$ for gapless systems with a
local driving on a single space point. 
}.

Now, combining the adiabaticity condition (\ref{thm: QAT}) and the bound on the
adiabatic fidelity $\mathcal{F}(\lambda)$ from the second improved inequality 
(\ref{eq: summarized inequalities one side b}) yields
\begin{align}
1-\epsilon-\mathcal{C}(\lambda)\leq g(\lambda).
\label{eq: adiabaticity condition with bounds}
\end{align}
This inequality has to be satisfied if adiabaticity is established.

We follow the discussion of Ref.\ \cite{Lychkovskiy17} to define the adiabatic mean free path $\lambda^{\,}_{*}$ as a solution to $\mathcal{F}(\lambda^{\,}_{*})=1/e.$
The leading asymptotic of $\lambda^{\,}_{*}$ reads
\begin{align}
\lambda^{\,}_{*}=C^{-1/2}_{N}.
\label{eq: adiabatic mean free path}
\end{align}
Observe that if the driving rate $\Gamma$ is independent of the system size $N$,
then Eq.\ (\ref{eq: general assumption on driven models}) 
indicates that $\mathcal{R}(\lambda=\lambda^{\,}_{*})$ (\ref{eq: define std of V}) 
vanishes under the limit of large $N$,
\begin{align}
\lim_{N\to\infinity}\mathcal{R}(\lambda^{\,}_{*})
=
\frac{1}{2\Gamma}
\lim_{N\to\infinity}
\frac{\delta V^{\,}_{N}}{C^{\,}_{N}}
=0.
\label{eq: QSL at diabatic mean free path}
\end{align}
Consequently, under the same large $N$ limit, the asymptotic behavior (\ref{eq: QSL at diabatic mean free path}) causes the function $g(\lambda)$ 
(\ref{eq: define g as a function of lambda}) to
vanish when $\lambda\geq\lambda^{\,}_{*}$.
If so, the inequality (\ref{eq: adiabaticity condition with bounds}) with $\lambda=\lambda^{\,}_{*}$ reads
$
1-\epsilon-e^{-1}\leq0
$
as
$
N\to\infinity.
$
This means $\epsilon$ cannot be arbitrarily small; thus, adiabaticity fails.

In order to avoid the adiabaticity breakdown, one has to allow the driving rate $\Gamma$ to scale down with increasing system size $N$, $\Gamma=\Gamma^{\,}_{N}.$
Here, $\Gamma^{\,}_{N}$ is determined by setting $\lambda=\lambda^{\,}_{*}$ in the inequality (\ref{eq: adiabaticity condition with bounds}) 
and approximating $\sin\mathcal{R}(\lambda^{\,}_{*})\approx\mathcal{R}(\lambda^{\,}_{*}).$ One finds
\begin{align}
\label{eq: scaling of Gamma}
\Gamma^{\,}_{N}\leq\frac{1}{2}\frac{\delta V^{\,}_{N}}{C^{\,}_{N}}
\frac{1}{1-\epsilon-e^{-1}}
M,
\end{align}
where the multiplicative factor 
$M=
e^{-1/2}\left(1-e^{-1}\right)^{1/2}+\left(1-e^{-1}-e^{-2}\right)^{1/2}
\approx1.187.$
Note that applying the same reasoning to the old inequality (\ref{eq: previous result}) delivers the multiplicative factor $M=1$ in Eq.\ (\ref{eq: scaling of Gamma}),
as was shown in Refs.\ \cite{Lychkovskiy17,Lychkovskiy17b}.
That is to say, compared to the old inequality (\ref{eq: previous result}), 
the improved inequality (\ref{eq: summarized inequalities one side b}) does not affect the scaling form of the driving rate $\Gamma^{\,}_{N}$,
but merely increases the multiplicative constant.

\begin{figure}[t]
\begin{center}
(a)
\includegraphics[width=0.4\textwidth]{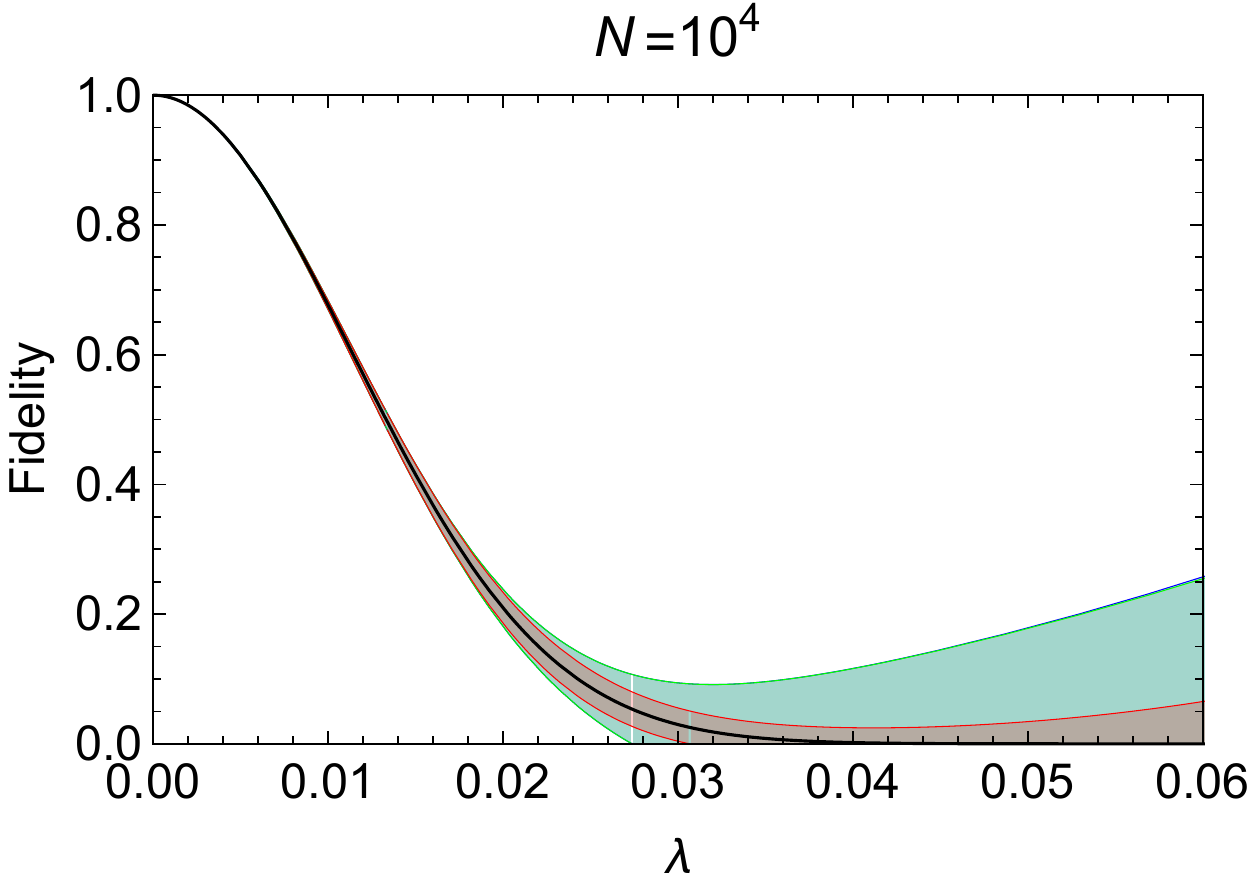}
\\
(b)
\includegraphics[width=0.4\textwidth]{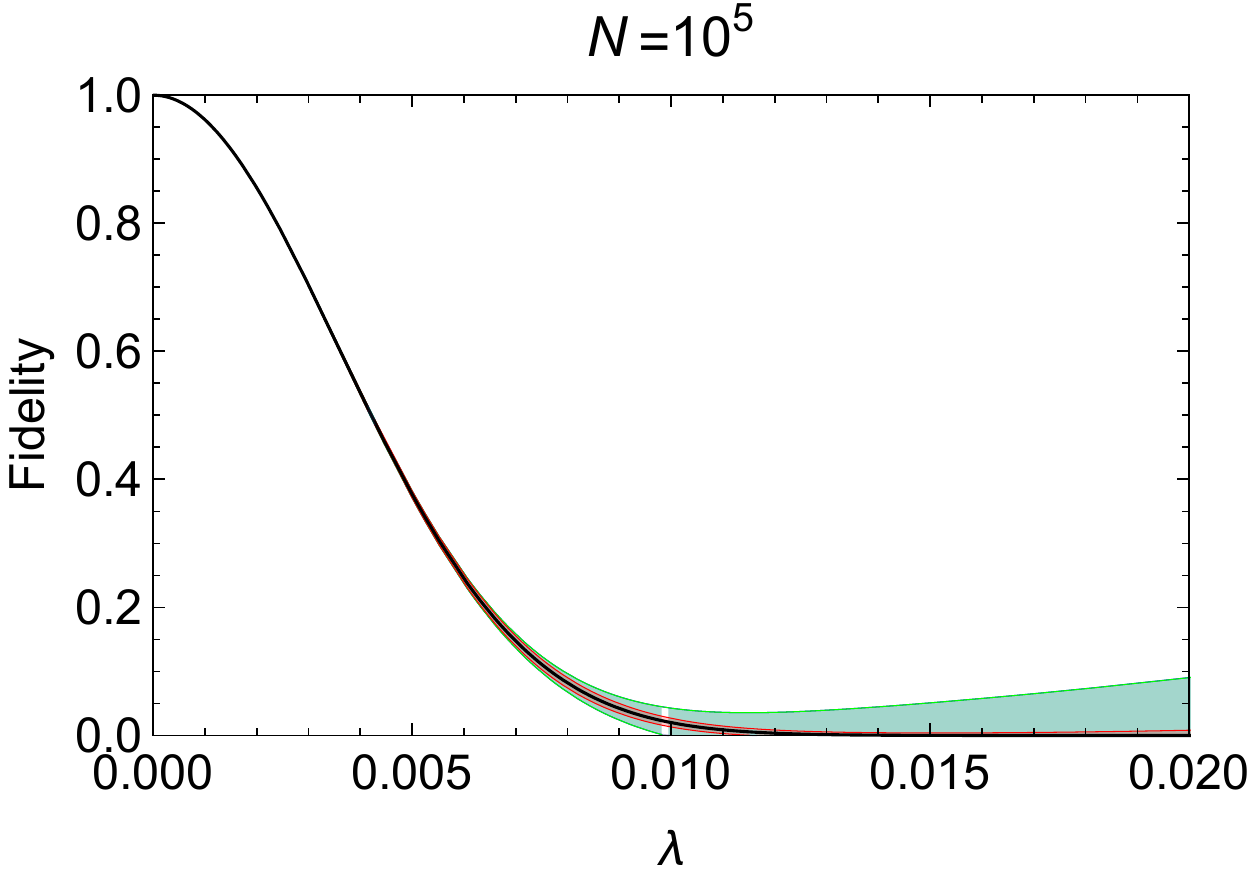}
\caption{
(Color online)
Same as in Fig.\ \ref{Fig: fidelity} but with $N=10^4$ in panel (a) and $N=10^5$ in panel (b).
\label{Fig: fidelity large N}
         }
\end{center}
\end{figure}

{\bf Summary and outlook.---}
In conclusion, we have derived two improved inequalities to bound the adiabatic fidelity using generalized orthogonality catastrophe and the quantum speed limit. 
These two inequalities are stronger than the previous result and are applicable to any quantum system.
In particular, one of the two improved inequalities is nearly sharp when the system size is large. 

In addition to quantum many-body systems, 
our method
could also be applied to other fields
in which bounds on adiabatic evolution are important,
such as adiabatic quantum computation \cite{Farhi00,Roland02,Albash18,Lychkovskiy18,Suzuki20}
and adiabatic quantum control \cite{Rosenfeld96,Leghtas11,Brif14,Meister14,Augier18}. 


{\it Acknowledgments.} 
This work is part of the project Adiabatic Protocols in Extended Quantum Systems, Project No 680-91-130, 
which is funded by the Dutch Research Council (NWO).

\bibliography{references-adiabatic}

\clearpage

\onecolumngrid


\renewcommand{\theequation}{S\arabic{equation}}
\setcounter{equation}{0}
\renewcommand{\thefigure}{S\arabic{figure}}
\setcounter{figure}{0}

\renewcommand{\thesection}{S\arabic{section}}
\setcounter{section}{0}

\begin{center}
\textbf{\large Supplemental Material: Bounds on quantum adiabaticity in driven many-body systems
from generalized orthogonality catastrophe and quantum speed limit}\\
\vspace{0.5cm}
Jyong-Hao Chen$^{1}$ and Vadim Cheianov$^{1}$\\
\vspace{0.2cm}
\textit{$^1$Instituut-Lorentz, Universiteit Leiden, P.O.\ Box 9506, 2300 RA Leiden, The Netherlands}\\
\end{center}


\section{Derivation of the quantum speed limit (\ref{eq: bound from QSL})}
\label{sec: Derivation of the quantum speed limit}

This section provides an alternative derivation for the inequality of quantum speed limit (\ref{eq: bound from QSL}).
Our approach described below, inspired by Ref.\ \cite{Vaidman92}, is pretty elementary as compared with the original proof
given in Refs.\ \cite{Pfeifer93a,Pfeifer95}
(see also the supplemental material of Ref.\ \cite{Lychkovskiy17}).

We start by mentioning a simple formula.
For any Hermitian operator $A$ and any quantum state $|\psi\rangle,$
there is a decomposition for $A|\psi\rangle$ taking the following form \cite{Aharonov90}
\begin{align}
A|\psi\rangle = \langle A\rangle|\psi\rangle + \Delta A|\psi^{\,}_{\perp}\rangle, \qquad \langle\psi|\psi^{\,}_{\perp}\rangle=0,
\label{eq: a key formula}
\end{align}
where $|\psi^{\,}_{\perp}\rangle$ is a vector orthogonal to $|\psi\rangle$, 
$\langle A\rangle = \langle\psi|A|\psi\rangle$ and $\Delta A=\sqrt{\langle A^2\rangle-\langle A\rangle^2}.$

We are given a time-dependent Hamiltonian $H^{\,}_{\lambda(t)}$ and the scaled time-dependent Schr\"odinger equation (\ref{eq: time-dependent SE}).
The first step is to calculate the rate of the change of the fidelity 
$F(\Psi^{\,}_{\lambda},\Psi^{\,}_{0})$ (\ref{eq: define fidelity})
and use the 
scaled time-dependent Schr\"odinger equation (\ref{eq: time-dependent SE}) 
($\hbar$ is restored in this section),
\begin{align}
\frac{\partial}{\partial\lambda}F(\Psi^{\,}_{\lambda},\Psi^{\,}_{0})
=
\frac{\partial}{\partial\lambda}
\Big(
\langle\Psi^{\,}_{\lambda}|\Psi^{\,}_{0}\rangle\langle\Psi^{\,}_{0}|\Psi^{\,}_{\lambda}\rangle
\Big)
=
2\Re\left(
\left(\frac{\partial}{\partial\lambda}
\langle\Psi^{\,}_{\lambda}|\right)|\Psi^{\,}_{0}\rangle\langle\Psi^{\,}_{0}|\Psi^{\,}_{\lambda}\rangle
\right)
=
\frac{2}{\hbar\,\partial^{\,}_{t}\lambda}
\Re
\Big(
\mathrm{i}\langle\Psi^{\,}_{\lambda}|H^{\,}_{\lambda}|\Psi^{\,}_{0}\rangle\langle\Psi^{\,}_{0}|\Psi^{\,}_{\lambda}\rangle
\Big).
\label{eq: manipulate the rate of change}
\end{align}
Next, the most crucial step is to decompose $H^{\,}_{\lambda}|\Psi^{\,}_{0}\rangle$
in Eq.\ (\ref{eq: manipulate the rate of change}) using the formula (\ref{eq: a key formula}),
\begin{align}
H^{\,}_{\lambda}|\Psi^{\,}_{0}\rangle=\langle H^{\,}_{\lambda}\rangle^{\,}_{0}|\Psi^{\,}_{0}\rangle + \Delta E^{\,}_{0}(\lambda)|\Psi^{\perp}_{0}\rangle,
\label{eq: key formula special case}
\end{align}
where $\langle H^{\,}_{\lambda}\rangle^{\,}_{0}=\langle\Psi^{\,}_{0}|H^{\,}_{\lambda}|\Psi^{\,}_{0}\rangle$ and
$\Delta E^{\,}_{0}(\lambda)$ is defined in Eq.\ (\ref{eq: bound from QSL c}). 
Substituting this decomposition into Eq.\ (\ref{eq: manipulate the rate of change}) yields
\begin{align}
\frac{\partial}{\partial\lambda}F(\Psi^{\,}_{\lambda},\Psi^{\,}_{0})
=
\frac{2\langle H^{\,}_{\lambda}\rangle^{\,}_{0}}{\hbar\,\partial^{\,}_{t}\lambda}
\underbrace{
\Re\Big(\mathrm{i}
|\langle\Psi^{\,}_{\lambda}|\Psi^{\,}_{0}\rangle|^{2}
\Big)}^{\,}_{=0}
+
\frac{2\Delta E^{\,}_{0}}{\hbar\,\partial^{\,}_{t}\lambda}
\Re\Big(\mathrm{i}
\langle\Psi^{\,}_{\lambda}|\Psi^{\perp}_{0}\rangle\langle\Psi^{\,}_{0}|\Psi^{\,}_{\lambda}\rangle
\Big).
\end{align}
Upon taking absolute value for both sides and using the inequality
$|\Re(z)|\leq|z|$ for $z\in\mathbb{C}$, the above equation reads
\begin{align}
\left|
\frac{\partial}{\partial\lambda}F(\Psi^{\,}_{\lambda},\Psi^{\,}_{0})
\right|
=
\frac{2\Delta E^{\,}_{0}}{\hbar}
\left|
\frac{1}{\partial^{\,}_{t}\lambda}
\Re\Big(\mathrm{i}
\langle\Psi^{\,}_{\lambda}|\Psi^{\perp}_{0}\rangle\langle\Psi^{\,}_{0}|\Psi^{\,}_{\lambda}\rangle
\Big)
\right|
\leq
\frac{2\Delta E^{\,}_{0}}{\hbar|\partial^{\,}_{t}\lambda|}
\left|
\langle\Psi^{\,}_{\lambda}|\Psi^{\perp}_{0}\rangle
\right|
\left|
\langle\Psi^{\,}_{0}|\Psi^{\,}_{\lambda}\rangle
\right|.
\label{eq: rate of change intermediate}
\end{align}

Now, recall that the physical state $|\Psi^{\,}_{\lambda}\rangle$ can be decomposed in terms of $|\Psi^{\,}_{0}\rangle$ and $|\Psi^{\perp}_{0}\rangle$
as in Eq.\ (\ref{eq: psi expansion}),
the normalization condition of $|\Psi^{\,}_{\lambda}\rangle$ then implies
$
|\langle\Psi^{\,}_{\lambda}|\Psi^{\perp}_{0}\rangle|
=
\sqrt{
1-
|\langle\Psi^{\,}_{\lambda}|\Psi^{\,}_{0}\rangle|^{2}
}
=
\sqrt{1-F(\Psi^{\,}_{\lambda},\Psi^{\,}_{0})}.
$
Therefore, Eq.\ (\ref{eq: rate of change intermediate}) can be written as
\begin{align}
\left|
\frac{\partial}{\partial\lambda}F(\Psi^{\,}_{\lambda},\Psi^{\,}_{0})
\right|
\leq
\frac{2\Delta E^{\,}_{0}}{\hbar|\partial^{\,}_{t}\lambda|}
\sqrt{1-F(\Psi^{\,}_{\lambda},\Psi^{\,}_{0})}\sqrt{F(\Psi^{\,}_{\lambda},\Psi^{\,}_{0})}.
\end{align}
After introducing the Bures angle (\ref{eq: define Bures angle}),
the above equation reads
\begin{align}
&
\left|
\frac{\pi}{2}
\frac{\partial}{\partial\lambda}
D(\Psi^{\,}_{\lambda},\Psi^{\,}_{0})
\right|
\leq
\frac{\Delta E^{\,}_{0}}{\hbar|\partial^{\,}_{t}\lambda|}
\qquad
\Rightarrow
\qquad
-\frac{\Delta E^{\,}_{0}}{\hbar|\partial^{\,}_{t}\lambda|}
\leq
\frac{\pi}{2}
\frac{\partial}{\partial\lambda}
D(\Psi^{\,}_{\lambda},\Psi^{\,}_{0})
\leq
\frac{\Delta E^{\,}_{0}}{\hbar|\partial^{\,}_{t}\lambda|},
\end{align}
which can be integrated over $\lambda$ to obtain (recall $D(\Psi^{\,}_{\lambda},\Psi^{\,}_{0})\geq0$ by definition)
\begin{align}
&
\frac{\pi}{2}D(\Psi^{\,}_{\lambda},\Psi^{\,}_{0})
\leq
\frac{1}{\hbar}
\int^{\lambda}_{0}
\frac{\mathrm{d}\lambda'}{|\partial^{\,}_{t}\lambda'|}
\Delta E^{\,}_{0}(\lambda').
\label{eq: QSL proof type a}
\end{align}
Hence, the proof of Eq.\ (\ref{eq: bound from QSL}) is complete.

Note that in Eq.\ (\ref{eq: manipulate the rate of change}), if we instead apply the key formula (\ref{eq: a key formula}) 
to $\langle\Psi^{\,}_{\lambda}|H^{\,}_{\lambda}$, 
we shall obtain the following version of the quantum speed limit
\begin{align}
&
\frac{\pi}{2}D(\Psi^{\,}_{\lambda},\Psi^{\,}_{0})
\leq
\frac{1}{\hbar}
\int^{\lambda}_{0}
\frac{\mathrm{d}\lambda'}{|\partial^{\,}_{t}\lambda'|}
\Delta E(\lambda'),
\label{eq: QSL proof type b}
\end{align}
where $\Delta E(\lambda)=\sqrt{\langle\Psi^{\,}_{\lambda}|H^{2}_{\lambda}|\Psi^{\,}_{\lambda}\rangle-\langle\Psi^{\,}_{\lambda}|H^{\,}_{\lambda}|\Psi^{\,}_{\lambda}\rangle^2}$
is the quantum uncertainty of $H^{\,}_{\lambda}$ with respect to $|\Psi^{\,}_{\lambda}\rangle.$ 
For a time-independent Hamiltonian, Eq.\ (\ref{eq: QSL proof type b}) is identical to Eq.\ (\ref{eq: QSL proof type a}). 
This type of quantum speed limit is known as {\it Mandelstam-Tamm inequality} in literature \cite{Mandelstam45,Deffner17}.

\section{Direct proof of the inequality (\ref{eq: new inequality})}
\label{sec: direct proof of the inequality}

In this section, we want to prove the inequality (\ref{eq: new inequality}) using an elementary method.
In terms of the fidelity (\ref{eq: define fidelity}) and the Bures angle (\ref{eq: define Bures angle}),
the inequality (\ref{eq: new inequality}) reads
\begin{align}
\left|
F\left(\Phi^{\,}_{\lambda},\Psi^{\,}_{\lambda}\right)-F\left(\Phi^{\,}_{\lambda},\Phi^{\,}_{0}\right)
\right|
\quad\leq\quad
\sin\left(
\frac{\pi}{2}
D(\Psi^{\,}_{\lambda},\Psi^{\,}_{0})
\right).
\end{align}
Indeed, this inequality is an instance of the following lemma \cite{Rastegin02a,Rastegin02b}:
\begin{lemma}
For each triplet $\{|\psi^{\,}_{1}\rangle,|\psi^{\,}_{2}\rangle,|\psi^{\,}_{3}\rangle\}$ of states, the following inequality holds:
\begin{align}
\left|
F\left(\psi^{\,}_{1},\psi^{\,}_{2}\right)-F\left(\psi^{\,}_{1},\psi^{\,}_{3}\right)
\right|
\leq 
\sin\left(
\frac{\pi}{2}
D(\psi^{\,}_{2},\psi^{\,}_{3})
\right),
\label{eq: an important inequality}
\end{align}
where $F(.,.)$ is the fidelity (\ref{eq: define fidelity}) and $D(.,.)$ is the Bures angle (\ref{eq: define Bures angle}).
\end{lemma}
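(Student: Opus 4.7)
The plan is to reduce the claimed inequality to the triangle inequality for the Bures angle. For $i,j\in\{1,2,3\}$, set
\begin{equation*}
\theta^{\,}_{ij}\:=\tfrac{\pi}{2}D(\psi^{\,}_{i},\psi^{\,}_{j})=\arccos\sqrt{F(\psi^{\,}_{i},\psi^{\,}_{j})}\in[0,\pi/2],
\end{equation*}
so that $F(\psi^{\,}_{i},\psi^{\,}_{j})=\cos^{2}\theta^{\,}_{ij}$ and the inequality (\ref{eq: an important inequality}) becomes $|\cos^{2}\theta^{\,}_{12}-\cos^{2}\theta^{\,}_{13}|\leq\sin\theta^{\,}_{23}$.

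\textbf{Key step: Bures-angle triangle inequality.} I would first prove $\theta^{\,}_{13}\leq\theta^{\,}_{12}+\theta^{\,}_{23}$. Since neither $F$ nor $D$ changes under a global rephasing of any of the three states, one may assume the gauge $\langle\psi^{\,}_{1}|\psi^{\,}_{2}\rangle=\cos\theta^{\,}_{12}$ and $\langle\psi^{\,}_{2}|\psi^{\,}_{3}\rangle=\cos\theta^{\,}_{23}$. Completing $\{|\psi^{\,}_{2}\rangle\}$ to an orthonormal basis $\{|\psi^{\,}_{2}\rangle,|\psi^{\perp,k}_{2}\rangle\}$ and inserting the resolution of the identity into $\langle\psi^{\,}_{1}|\psi^{\,}_{3}\rangle$ yields
\begin{equation*}
\langle\psi^{\,}_{1}|\psi^{\,}_{3}\rangle
=\cos\theta^{\,}_{12}\cos\theta^{\,}_{23}
+\sum_{k}\langle\psi^{\,}_{1}|\psi^{\perp,k}_{2}\rangle\langle\psi^{\perp,k}_{2}|\psi^{\,}_{3}\rangle.
\end{equation*}
The remainder sum is bounded in modulus, by Cauchy--Schwarz together with Parseval, by $\sin\theta^{\,}_{12}\sin\theta^{\,}_{23}$, so $|\langle\psi^{\,}_{1}|\psi^{\,}_{3}\rangle|\geq\cos(\theta^{\,}_{12}+\theta^{\,}_{23})$. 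The regime $\theta^{\,}_{12}+\theta^{\,}_{23}\geq\pi/2$ is trivial (the lower bound is non-positive), and otherwise monotonicity of $\arccos$ on $[0,1]$ gives $\theta^{\,}_{13}\leq\theta^{\,}_{12}+\theta^{\,}_{23}$. Running the same argument with the roles of $|\psi^{\,}_{2}\rangle$ and $|\psi^{\,}_{3}\rangle$ interchanged (i.e., using $|\psi^{\,}_{3}\rangle$ as the ``middle'' state) yields $\theta^{\,}_{12}\leq\theta^{\,}_{13}+\theta^{\,}_{23}$, and hence the two-sided bound $|\theta^{\,}_{12}-\theta^{\,}_{13}|\leq\theta^{\,}_{23}$.

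\textbf{Closing trigonometric step.} I would then invoke the elementary identity
\begin{equation*}
\cos^{2}\theta^{\,}_{12}-\cos^{2}\theta^{\,}_{13}
=\sin(\theta^{\,}_{13}+\theta^{\,}_{12})\,\sin(\theta^{\,}_{13}-\theta^{\,}_{12}).
\end{equation*}
Since $\theta^{\,}_{12}+\theta^{\,}_{13}\in[0,\pi]$, the first factor satisfies $|\sin(\theta^{\,}_{12}+\theta^{\,}_{13})|\leq1$. Since $|\theta^{\,}_{13}-\theta^{\,}_{12}|\leq\theta^{\,}_{23}\leq\pi/2$ by the previous step, monotonicity of $\sin$ on $[0,\pi/2]$ gives $|\sin(\theta^{\,}_{13}-\theta^{\,}_{12})|\leq\sin\theta^{\,}_{23}$. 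Multiplying the two bounds proves the lemma.

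\textbf{Main obstacle.} The only substantive ingredient is the Bures-angle triangle inequality; everything else is trigonometric bookkeeping. The delicate point there is the phase normalization: without aligning the phases of $\langle\psi^{\,}_{1}|\psi^{\,}_{2}\rangle$ and $\langle\psi^{\,}_{2}|\psi^{\,}_{3}\rangle$, the cross term in the resolution-of-identity expansion need not be real and positive, and the Cauchy--Schwarz estimate would no longer yield the sharp lower bound $\cos(\theta^{\,}_{12}+\theta^{\,}_{23})$ on $|\langle\psi^{\,}_{1}|\psi^{\,}_{3}\rangle|$ needed to close the argument.
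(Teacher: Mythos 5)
Your proof is correct and follows essentially the same route as the paper's: both rest on the Bures-angle triangle inequality combined with the identity $\cos^{2}x-\cos^{2}y=\sin(x+y)\sin(y-x)$, the only cosmetic difference being that you bound the two factors $|\sin(\theta^{\,}_{12}+\theta^{\,}_{13})|\leq 1$ and $|\sin(\theta^{\,}_{13}-\theta^{\,}_{12})|\leq\sin\theta^{\,}_{23}$ separately, whereas the paper substitutes the triangle inequality into one of the $\cos^{2}$ terms before applying the identity. Your one genuine addition is a self-contained proof of the triangle inequality itself (via the phase-gauged resolution-of-identity and Cauchy--Schwarz estimate), which the paper simply invokes as known.
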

\begin{proof}
Apply the following triangle inequality
\begin{align}
D(\psi^{\,}_{2},\psi^{\,}_{3})\leq D(\psi^{\,}_{1},\psi^{\,}_{2}) + D(\psi^{\,}_{1},\psi^{\,}_{3}),
\label{eq: triangle inequality prime}
\end{align}
and the trigonometric formula, $\cos^2x-\cos^2y=\sin(x+y)\sin(y-x)$, 
to the left side of Eq.\ (\ref{eq: an important inequality})
\begin{align}
F\left(\psi^{\,}_{1},\psi^{\,}_{2}\right)-F\left(\psi^{\,}_{1},\psi^{\,}_{3}\right)
&=
\cos^2
\left(
\frac{\pi}{2}
D(\psi^{\,}_{1},\psi^{\,}_{2})
\right)
-
\cos^2
\left(
\frac{\pi}{2}
D(\psi^{\,}_{1},\psi^{\,}_{3})
\right)
\nonumber
\\
&\leq
\cos^2
\left(
\frac{\pi}{2}
D(\psi^{\,}_{2},\psi^{\,}_{3})
-
\frac{\pi}{2}
D(\psi^{\,}_{1},\psi^{\,}_{3})
\right)
-
\cos^2
\left(
\frac{\pi}{2}
D(\psi^{\,}_{1},\psi^{\,}_{3})
\right)
\nonumber
\\
&=
\sin
\left(
\frac{\pi}{2}
D(\psi^{\,}_{2},\psi^{\,}_{3})
\right)
\sin
\left(
\pi
D(\psi^{\,}_{1},\psi^{\,}_{3})
-
\frac{\pi}{2}
D(\psi^{\,}_{2},\psi^{\,}_{3})
\right)
\nonumber
\\
&\leq
\sin
\left(
\frac{\pi}{2}
D(\psi^{\,}_{2},\psi^{\,}_{3})
\right).
\label{eq: prove an inequality a}
\end{align}
Exchange the role of $|\psi^{\,}_{2}\rangle$ and $|\psi^{\,}_{3}\rangle$ in Eq.\ (\ref{eq: prove an inequality a}) yields
\begin{align}
F\left(\psi^{\,}_{1},\psi^{\,}_{3}\right)-F\left(\psi^{\,}_{1},\psi^{\,}_{2}\right)
&=
\cos^2
\left(
\frac{\pi}{2}
D(\psi^{\,}_{1},\psi^{\,}_{3})
\right)
-
\cos^2
\left(
\frac{\pi}{2}
D(\psi^{\,}_{1},\psi^{\,}_{2})
\right)
\nonumber
\\
&\leq
\sin
\left(
\frac{\pi}{2}
D(\psi^{\,}_{2},\psi^{\,}_{3})
\right).
\label{eq: prove an inequality b}
\end{align}
Upon combing Eqs.\ (\ref{eq: prove an inequality a}) and (\ref{eq: prove an inequality b}),
the proof of Eq.\ (\ref{eq: an important inequality}) is therefore complete.
\end{proof}

\end{document}